\newcommand{\cmark}{\ding{51}}%
\newcommand{\xmark}{\ding{55}}%
\newtheorem{theorem}{Theorem}[section]
\newtheorem{proposition}{Proposition}[section]
\theoremstyle{remark}
\newtheorem*{remark}{Remark}
\newcommand{\dd}{\mathrm{d}}
\newcommand{\R}{\mathbb{R}}
\newcommand{\Z}{\mathbb{Z}}
\newcommand{\E}{\mathbb{E}}
\newcommand{\Var}{\mathrm{Var}}
\newcommand{\Cov}{\mathrm{Cov}}
\newcommand{\1}{\mathds{1}}
\newcommand{\bigO}{\mathcal{O}}
\newcommand{\iu}{\mathrm{i}}
\DeclareMathOperator*{\argmin}{arg\,min}
\author[1]{Miguel Martinez Herrera}
\author[2]{Felix Cheysson}
\affil[1]{Institut Pasteur, Université Paris Cité, CNRS UMR3738, Zebrafish Neurogenetics Unit, F-75015 Paris, France. Email: miguel.martinez-herrera@pasteur.fr}
\affil[2]{LAMA, CNRS UMR 8050, Université Gustave Eiffel, 77420 Champs-sur-Marne, France. Email: felix.cheysson@univ-eiffel.fr}
\title{Ridge-penalised spectral least-squares estimation for point processes}
\begin{document}
\maketitle

\begin{abstract}
Penalised estimation methods for point processes usually rely on a large amount of independent repetitions for cross-validation purposes.
    However, in the case of a single realisation of the process, existing cross-validation methods may be impractical depending on the chosen model.
    To overcome this issue, this paper presents a Ridge-penalised spectral least-squares estimation method for second-order stationary point processes.
    This is achieved through two novel approaches: a $p$-thinning-based cross-validation method to tune the penalisation parameter, relying on the spectral representation of the process; and the introduction of a spectral least-squares contrast based around the asymptotic properties of the periodogram of the sample.
    The proposed method is then illustrated by a simulation study on linear Hawkes processes in the context of parametric estimation, highlighting its performances against more traditional approaches, specifically when working with short observation windows.
\end{abstract}

\section{Introduction}

As point data become more readily available, point process models are increasingly used across various fields \citep{Illian2007, Baddeley2016}. 
Common estimation methods for point processes typically involve maximum likelihood \citep{Ogata1978}, spectral techniques like Whittle likelihood \citep{Adamopoulos1976, Yang2026}, or more \textit{ad-hoc} approaches, such as minimum contrast methods \citep{Biscio2017, Diggle2013} or least squares contrast \citep{Reynaud-Bouret2010}.

Linear Hawkes processes, introduced by \cite{Hawkes1971}, form a class of point processes in which the occurrence of an event increases the probability of subsequent events. 
Initially applied in seismology \citep{Adamopoulos1976}, their use has expanded to fields including but not limited to genomics \citep{Reynaud-Bouret2010}, epidemiology \citep{Meyer2012}, neurology \citep{Reynaud-Bouret2014}, and finance \citep{Bacry2015}.

In this article, we are interested in Ridge-penalised estimation for Hawkes processes.
Despite their increasing relevance, penalised estimation techniques for Hawkes processes remain limited in the literature, with existing works focusing on Lasso penalisation \citep{Reynaud-Bouret2014, Bacry2020, Denis2025}. 
A critical step in these techniques usually involves the choice of the penalisation parameter: in \cite{Reynaud-Bouret2014} and \cite{Bacry2020}, the penalisation parameter is determined using asymptotic results, while \cite{Denis2025} selects it via a specific model criterion (namely EBIC). 
These works usually assume that many independent repetitions of the Hawkes process are observed, or that the observation window is very large.
There is, however, a significant lack of cross-validation methods for applying penalised estimation techniques---such as Lasso and Ridge---in point processes.

One notable contribution is \cite{Cronie2024}, who introduces a $p$-thinning-based cross-validation method for point processes, leveraging the Papangelou conditional intensity. 
But this intensity is often untractable for many point process families, or their thinned versions.
To address this issue, we propose to leverage their often explicit spectral measure, allowing our approach to be applicable where previous methods may struggle.
Indeed, there has been a recent interest in spectral approaches to estimation for point processes \citep{Cheysson2022, Yang2026, Bonnet2024}, often making up for the lack of tractable conditional intensity in specific models.

This paper proposes a spectral approach to $p$-thinning-based cross-validation, complementing the approach of \cite{Cronie2024}. 
This cross-validation technique is then used to select the Ridge-penalisation parameter in a novel spectral framework to least-squares estimation.
Although our method is illustrated on linear Hawkes processes, it is broadly applicable to any spatial, second-order stationary point process.

The paper is organized as follows: after recalling some notations in Section \ref{sec:notations}, Section \ref{sec:contrast} introduces a spectral least-squares contrast and estimator in a general framework, based on the periodogram of the point process. 
In Section \ref{sec:random_subsampling}, we describe the $p$-thinning subsampling cross-validation method, in the context of Ridge-penalised spectral least-squares estimation.
Finally, Section \ref{sec:hawkes_and_simulations} presents a simulation study that illustrates the performance of our proposed estimation technique for the linear Hawkes process.

\section{Notation}\label{sec:notations}

Let $N$ be a second-order stationary point process on $\R$ with mean intensity $m^*$ and spectral measure $\Gamma^*$, assumed to be absolutely continuous with respect to the Lebesgue measure and with density $f^*$.
Formally, the spectral measure $\Gamma^*$ of a point process $N$ is the unique measure on $\mathcal B(\R)$ such that, 
for any $\varphi\in L^2(\mathbb{R})$, \citep[Proposition 8.2.I]{DaleyV1}: 
\begin{align}\label{eq:bartlett_variance}
    \Var\left[\int_{\R}{\varphi(x)\,N(\dd x)}\right] 
    &= \int_{\R}{|\widetilde\varphi(\nu)|^2\,\Gamma^*(\dd \nu)}\\ 
    &= \int_{\R}{|\widetilde\varphi(\nu)|^2\,f^*(\nu)\,\dd \nu},\notag
\end{align}
with $\widetilde\varphi(\nu) = \int_\R e^{-2i\pi \nu x} \varphi(x) \dd x$ the Fourier transform of $f$.
Existence of such a measure is established for any stationary point process \citep[Proposition 8.2.I.(a)]{DaleyV1}.
Denoting $f_0^*$ the Fourier transform of the factorial covariance measure of $N$, the spectral density can always be decomposed such that \citep[Equation 8.2.4]{DaleyV1}
\[f^*(\nu) = m^* + f_0^*(\nu)\,,\]
for all $\nu\in\R$, with $f_0^* \in L^2(\R)$ its compensated spectral density.

\section{A spectral least-squares estimator}\label{sec:contrast}

Let us provide a least-squares contrast estimator for the stationary point process $N$, in a similar manner as in \cite{Reynaud-Bouret2010}.
Suppose we observe a realisation of the process $N$ on the interval $[0, T]$ and want to estimate its mean intensity and spectral measure $f = (m, f_0)$.
Let $D$ be a fixed compact on $\R$, of the form $D = [-A, A]$, and define
\begin{equation*}
    \mathcal{F} = \{f=(m, f_0) \colon f_0\in L^2(D),
    \|f\|_{\mathcal{F}}^2 := m^2 + \|f_0\|_{L^2(D)}^2 < +\infty\}
\end{equation*}
as the space of candidate functions for the spectral density of $N$. 
Note that $f^* = (m^*, f_0^*)$, the (true) spectral density of $N$, belongs to $\mathcal F$.
It is clear that $\|\cdot\|_\mathcal{F}$ is a norm.

Measuring the distance between a candidate $f\in\mathcal{F}$ and $f^*$,
\begin{equation*}
\|f - f^*\|_{\mathcal F}^2 = (m - m^*)^2 + \|f_0 - f_0^*\|_{L^2(D)}^2,
\end{equation*}
it is straightforward to see that it is minimised for $f = f^*$.
Let us focus on $\|f_0 - f_0^*\|_{L^2(D)}^2$, which can be rewritten as follows
\begin{align}\label{eqn:L_2(D)}
    \|f_0 - f_0^*\|_{L^2(D)}^2 &= \int_{D}{\lvert f_0(\nu) - f_0^*(\nu)\rvert^2\,\dd \nu}\\
    &= \int_{D}{f_0^*}^2(\nu) \, \dd \nu + \int_{D}f_0^2(\nu) \, \dd \nu 
    - 2\int_{D} f_0(\nu)f_0^*(\nu) \,\dd \nu\notag.
\end{align}
Minimising the last expression with respect to $f$ is equivalent to minimising the last two terms, and so we can define:
\begin{equation*}
    \gamma^*(f) = (m - m^*)^2 + \int_{D}f_0^2(\nu)\,\dd \nu
    - 2\int_{D} f_0(\nu)f_0^*(\nu) \,\dd \nu.
\end{equation*}
By replacing $f_0^* = f^* - m^*$, we obtain:
\begin{equation}\label{eq:function_to_minimise}
        \gamma^*(f) = (m-m^*)^2 + \int_{D} f_0^2(\nu) \, \dd\nu
        - 2 \int_{D}{f_0(\nu) \, {f^*}(\nu) \,\dd \nu} + 2 m^*\int_{D}{f_0(\nu) \, \dd \nu}.
\end{equation}
Although this last quantity is still unobservable, we may plug in classical estimators for $m^*$ and $f_0^*$.
Let $\widehat m = N_T /T$ be the estimated average intensity and $\widehat I_T$ the periodogram of the process $N$ on $[0, T]$, defined as
\begin{equation*}
    \widehat I_T(\nu) = \frac{1}{T} \left| \int_0^T e^{-2\pi \iu \nu t} \bigl( N(\dd t) - \widehat m \dd t \bigr) \right|^2. 
\end{equation*}
Under second-order stationarity \citep{Yang2026}, for any $\nu\in \R \setminus \{0\}$,
\begin{equation*}
    \lim_{T\to\infty} \E[\widehat I_T(\nu)] = f^*(\nu).
\end{equation*}
We can then define the following observable quantity,
\begin{align}\label{eq:contrast}
    \begin{split}
        \gamma(f) &= (m-\widehat m)^2 + \int_{D} f_0^2(\nu) \, \dd\nu
        - 2 \int_{D}{f_0(\nu) \, \widehat I_T(\nu) \, \dd \nu} + 2 \widehat m\int_{D}{f_0(\nu) \, \dd \nu}
    \end{split}\notag\\
    &=: (m-\widehat m)^2 + \gamma_0(f_0).
\end{align}
Leveraging the Theorem 4.1 of \cite{Yang2026} yields that $\gamma$ is, asymptotically, a contrast.

\begin{proposition}\label{prop:contrast}
    Suppose that the process $N$ is second-order stationary with integrable second-order reduced cumulant intensity, and that the density $f_0$ is twice differentiable with bounded derivatives.
    Then,
    \begin{equation*}
        \lim_{T\to\infty}\E[\gamma(f)] = \gamma^*(f).
    \end{equation*}
    Further assume that the process $N$ is fourth-order stationary with integrable fourth-order reduced cumulant intensity, and that its fourth-order spectral cumulant is twice differentiable with bounded partial derivatives, then 
    \begin{equation*}
        \underset{T\to\infty}{\mathrm{plim}} \, \gamma(f) = \gamma^*(f).
    \end{equation*}  
\end{proposition}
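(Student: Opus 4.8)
\emph{Proof strategy.} The plan is to match $\gamma(f)$ against $\gamma^*(f)$ term by term. Subtracting the definition \eqref{eq:function_to_minimise} of $\gamma^*$ from the definition \eqref{eq:contrast} of $\gamma$, the difference splits into three contributions:
\begin{align*}
    \gamma(f) - \gamma^*(f) = {}& \underbrace{(m-\widehat m)^2 - (m-m^*)^2}_{(\mathrm{I})}
    \;-\; 2\underbrace{\int_D f_0(\nu)\bigl(\widehat I_T(\nu) - f^*(\nu)\bigr)\,\dd\nu}_{(\mathrm{II})}\\
    & {}+\; 2\underbrace{(\widehat m - m^*)\int_D f_0(\nu)\,\dd\nu}_{(\mathrm{III})},
\end{align*}
and the two assertions reduce to showing that $(\mathrm{I})$, $(\mathrm{II})$ and $(\mathrm{III})$ vanish in expectation, respectively in probability, as $T\to\infty$.

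For the convergence in expectation, stationarity gives $\E[\widehat m] = \E[N_T]/T = m^*$, so $\E[(\mathrm{III})] = 0$ and $\E[(\mathrm{I})] = \Var(\widehat m)$; integrability of the second-order reduced cumulant intensity yields $\Var(N_T) = \bigO(T)$, hence $\Var(\widehat m) = \Var(N_T)/T^2 = \bigO(1/T)\to 0$. For $(\mathrm{II})$ I would invoke the asymptotic unbiasedness $\E[\widehat I_T(\nu)]\to f^*(\nu)$ recalled above (valid for $\nu\neq 0$, the single frequency $\nu=0$ being Lebesgue-null), combined with a uniform-in-$T$ bound on $\E[\widehat I_T(\nu)]$ over the compact $D$; since $f_0$ is bounded on $D$, dominated convergence then gives $\E[(\mathrm{II})]\to 0$. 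Together these prove $\E[\gamma(f)]\to\gamma^*(f)$.

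For the convergence in probability, $\E[\widehat m]=m^*$ and $\Var(\widehat m)\to 0$ give $\widehat m \xrightarrow{\Prob} m^*$, which disposes of $(\mathrm{III})$ and, by the continuous mapping theorem, of $(\mathrm{I})$. It then remains to upgrade the mean convergence of $\int_D f_0\,\widehat I_T$ to a convergence in probability, i.e.\ to show $\Var\bigl(\int_D f_0(\nu)\widehat I_T(\nu)\,\dd\nu\bigr)\to 0$. This is exactly the content of Theorem 4.1 of \cite{Yang2026}: its hypotheses---fourth-order stationarity with integrable fourth-order reduced cumulant intensity, and a twice-differentiable fourth-order spectral cumulant with bounded partial derivatives---are precisely the ones added in the second part of the proposition, and the smoothness of the weight $f_0$ on $D$ absorbs the residual bias terms. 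Slutsky's lemma then yields $\gamma(f)\xrightarrow{\Prob}\gamma^*(f)$.

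I expect the main obstacle to be the interchange of limit and integral in $(\mathrm{II})$: one has to dominate $\E[\widehat I_T(\nu)]$---and, for the variance, the relevant second moment of the integrated periodogram---uniformly over $D$, in particular near $\nu=0$, where demeaning by $\widehat m$ alters the behaviour of the periodogram's bias. The quantitative bias and variance expansions underlying Theorem 4.1 of \cite{Yang2026} are what supply this control, and that is where the summable-cumulant and smoothness assumptions are genuinely consumed; the remainder is bookkeeping with the three-term decomposition and Slutsky-type arguments.
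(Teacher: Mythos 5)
Your proposal is correct and in substance coincides with the paper's argument, which itself is little more than the one-line remark preceding the proposition: cite Theorem~4.1 of \cite{Yang2026}. The three-term decomposition, the $\Var(\widehat m)=\bigO(1/T)$ bound from the integrable second-order cumulant, and the reduction of both claims to the bias and variance of $\int_D f_0\,\widehat I_T$ are exactly the bookkeeping the paper leaves implicit. One small caution: for term $(\mathrm{II})$ the cleaner route is to apply Theorem~4.1 of \cite{Yang2026} directly to the integrated (smoothed) periodogram $\int_D f_0\,\widehat I_T\,\dd\nu$ rather than going through pointwise unbiasedness plus dominated convergence --- the hypothesis that $f_0$ is twice differentiable with bounded derivatives is there precisely to control the bias of this smoothed quantity, and a uniform-in-$T$ bound on $\E[\widehat I_T(\nu)]$ near $\nu=0$ (where demeaning forces $\widehat I_T(0)=0$) is not otherwise free; you flag this yourself in your final paragraph, so the gap is acknowledged rather than overlooked.
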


\begin{remark}
    Assumptions regarding the fourth-order properties of the process are needed to ensure that the variance of the integrated periodogram, which appears for example in Equation \eqref{eq:contrast}, converges to zero when $T \to \infty$.
\end{remark}

Since $\gamma^*(f)$ is minimal for $f = f^*$ and by building on the previous proposition, then from Equation~\eqref{eq:contrast}, we may define the spectral least-squares estimator $\widehat f = (\widehat m, \widehat f_0)$ by
\begin{equation}\label{eqn:estimator}
    \widehat m = \frac{N_T}{T} \quad \text{and}\quad \widehat f_0\in \argmin_{f_0\in L^2(D)} \gamma_0(f_0).
\end{equation}

\begin{remark}
As $\gamma^*$ derives from the $L^2$ norm given in Equation \eqref{eqn:L_2(D)}, we may consider instead as an estimator $\widehat f^\perp = (\widehat m, \widehat f_0^\perp)$, with $\widehat f_0^\perp$ the $L^2(D)$-projection of $\widehat I_T - \widehat m$.
Then, $\widehat f_0^\perp$ also minimises $\gamma_0$ and $\widehat f = \widehat f^\perp$. 
Simulation studies in Section \ref{sec:performance_metrics} will illustrate this.
However, theoretical guarantees cannot be easily obtained for $\widehat f^\perp$ as it involves an additional term, $\int_D \widehat I_T^2(\nu) \dd\nu$, for which the asymptotic behaviour has not yet been studied in the literature, hence the need to work with the contrast $\gamma$.
\end{remark}

\paragraph{The spatial tapered case.}
Let us note that, while we presented, for clarity and ease of reading, the framework for a spectral least-squares estimator for a non-tapered periodogram and in a temporal setting, this framework can easily be extended to the tapered case and in $\R^d$, and the previous results still hold: we refer readers to the notations and adjustments given in \cite{Yang2026}.
In this context, we present very briefly the spectral least-squares estimator.

For a second-order stationary process $N$ on $\R^d$ observed within a compact domain $D_T = [-T/2, T/2]^d$ and a non-negative data taper $h$ on $\R^d$ with compact support $[-1/2, 1/2]^d$, let $\widehat m = N(D_T) / \lvert D_T \rvert$ and, for $\nu \in \R^d$, define
\begin{equation*}
    \widehat I_T(\nu) = \lvert D_T \rvert^{-1} H_{h,2}^{-1}\left| \int_{D_T} h(x/T) e^{-2\pi \iu x \cdot \nu} \bigl( N(\dd x) - \widehat m \dd x \bigr) \right|^2,
\end{equation*}
with $H_{h,2} = \int_{[-1/2, 1/2]^d} h(x)^2\dd x$.
Then the estimator $\widehat f = (\widehat m, \widehat f_0)$, with $\widehat f_0 \in \argmin_{f_0\in L^2(D)} \gamma_0(f_0)$,
\begin{equation*}
    \gamma_0(f) = \int_{D} f_0^2(\nu) \, \dd\nu - 2 \int_{D}{f_0(\nu) \, \widehat I_T(\nu) \, \dd \nu}
    + 2 \widehat m\int_{D}{f_0(\nu) \, \dd \nu},
\end{equation*}
and $D = [-A, A]^d$ a compact, is the spectral least-squares estimator for $f$.

\section{Random subsampling}\label{sec:random_subsampling}
\subsection{Spectral measure of a \textit{p}-thinned process}

For any $p \in (0,1)$, we introduce a $p$-thinned version of the process $N$, denoted $N_p$ and defined for any $B \in \mathcal{B}(\R)$ as
\[N_p(B) = \sum_{k\in\Z}{\1_{T_k \in B} Z_k}\,,\]
where $(T_k)_{k \in \Z}$ denotes the atoms of $N$, and $(Z_k)_{k\in\Z}$ is an i.i.d.\ collection of Bernoulli random variables of parameter $p$.
In essence, the atoms of $N_p$ correspond to a subset of $(T_k)_{k\in\Z}$ where each point is erased independently of the others with probability $1-p$.  
Leveraging the study of spectral quantities on marked point processes \citep{Bremaud2002, Bremaud2005}, we can establish an explicit expression of the spectral density of the $p$-thinning $N_p$, as established hereafter.
\begin{proposition}\label{prop:spectral_thinning}
    Let $N$ be a stationary point process with mean intensity $m^*$ and spectral density function $f^* = m^* + f_0^*$, and for any $p \in (0,1)$, let $N_p$ be a $p$-thinning of $N$.
    Then, $N_p$ has mean intensity $p m^*$ and admits a spectral density function, denoted $f_p^*$, given for all $\nu \in \R$ by
    \begin{align}\label{eq:spectral_thinning}
        f_p^*(\nu) &= p^2 f^*(\nu) + p(1-p)m^*\\
        &= p^2 f_0^*(\nu) + p m^*.\nonumber
    \end{align}
\end{proposition}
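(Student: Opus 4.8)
The plan is to identify the spectral density of $N_p$ directly from the defining variance identity \eqref{eq:bartlett_variance}: I would compute $\Var\left[\int_\R \varphi(x)\,N_p(\dd x)\right]$ for test functions $\varphi$ ranging over a class rich enough to determine a measure on $\mathcal B(\R)$ (say, continuous with compact support, so that every quantity below is finite; the extension to all of $L^2(\R)$ is then standard), and match the outcome against $\int_\R |\widetilde\varphi(\nu)|^2\, f_p^*(\nu)\,\dd\nu$. Uniqueness of the spectral measure \citep[Proposition 8.2.I]{DaleyV1} then pins down $f_p^*$. The mean intensity is immediate: since $(Z_k)_{k\in\Z}$ is i.i.d.\ Bernoulli$(p)$ and independent of $N$, for any bounded Borel set $B$ one has $\E[N_p(B)] = \E\left[\sum_{k\in\Z} \1_{T_k\in B}\, Z_k\right] = p\,\E[N(B)] = p\,m^*\,|B|$, so $N_p$ has mean intensity $pm^*$.

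For the variance, the crucial step is to condition on $N$. Writing $\int_\R \varphi(x)\,N_p(\dd x) = \sum_{k\in\Z} \varphi(T_k)\,Z_k$ and using that the $Z_k$ are i.i.d.\ and independent of $N$, conditionally on $N$ the summands become independent, so
\begin{align*}
    \E\left[\int_\R \varphi(x)\,N_p(\dd x)\,\Big|\,N\right] &= p\int_\R \varphi(x)\,N(\dd x),\\
    \Var\left[\int_\R \varphi(x)\,N_p(\dd x)\,\Big|\,N\right] &= p(1-p)\int_\R \varphi(x)^2\,N(\dd x).
\end{align*}
The law of total variance, together with the first-moment (Campbell) formula $\E\left[\int_\R \varphi(x)^2\,N(\dd x)\right] = m^*\int_\R \varphi(x)^2\,\dd x$, then yields
\begin{equation*}
    \Var\left[\int_\R \varphi(x)\,N_p(\dd x)\right] = p(1-p)\,m^*\int_\R \varphi(x)^2\,\dd x + p^2\,\Var\left[\int_\R \varphi(x)\,N(\dd x)\right].
\end{equation*}

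I would finish by applying \eqref{eq:bartlett_variance} to the last term and Plancherel's identity $\int_\R \varphi(x)^2\,\dd x = \int_\R |\widetilde\varphi(\nu)|^2\,\dd\nu$ (valid with the normalisation of the Fourier transform used here) to the first, which gives
\begin{equation*}
    \Var\left[\int_\R \varphi(x)\,N_p(\dd x)\right] = \int_\R |\widetilde\varphi(\nu)|^2\,\bigl(p^2 f^*(\nu) + p(1-p)m^*\bigr)\,\dd\nu .
\end{equation*}
Since this holds for every admissible $\varphi$, uniqueness of the spectral measure forces $f_p^*(\nu) = p^2 f^*(\nu) + p(1-p)m^*$ for (Lebesgue-)almost every $\nu$; this function is nonnegative and equals a constant plus an $L^2(\R)$ term, so $N_p$ indeed admits a spectral density with the required decomposition. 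Substituting $f^* = m^* + f_0^*$ and simplifying $p^2 m^* + p(1-p)m^* = pm^*$ gives the second expression $f_p^*(\nu) = p^2 f_0^*(\nu) + pm^*$.

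The real work is the measure-theoretic bookkeeping behind the conditioning argument: one must justify that $N_p$ is a bona fide Bernoulli thinning whose marks are independent of the ground process, that the summands $\varphi(T_k)Z_k$ are conditionally independent given $N$, and that the (a priori infinite) sums can be swapped with expectations — all routine once $\varphi$ is taken continuous with compact support before the $L^2$ limiting argument. An alternative that avoids conditioning altogether is to read the formula off the Bartlett spectrum of the marked point process $\{(T_k, Z_k)\}$ with i.i.d.\ marks, for which explicit expressions are available in \citep{Bremaud2002, Bremaud2005}; the final identification through \eqref{eq:bartlett_variance} is then immediate.
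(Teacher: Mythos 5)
Your proof is correct, but it takes a genuinely different and more elementary route than the paper. The paper's argument invokes \cite[Theorem 2]{Bremaud2005}, a general covariance formula for marked point processes with i.i.d.\ marks, then specialises to $\varphi^\star(x,z) = \varphi(x)z$, $\psi^\star(x,z) = \psi(x)z$ with Bernoulli marks, verifies the (somewhat fiddly) integrability hypotheses of that theorem, and concludes by Fourier duality. Your argument instead conditions on the ground process $N$: given $N$, the summands $\varphi(T_k)Z_k$ are independent, so the conditional mean is $p\int\varphi\,\dd N$ and the conditional variance is $p(1-p)\int\varphi^2\,\dd N$; the law of total variance plus Campbell's first-moment formula then gives $\Var\bigl[\int\varphi\,\dd N_p\bigr] = p(1-p)m^*\int\varphi^2 + p^2\Var\bigl[\int\varphi\,\dd N\bigr]$, and Plancherel together with uniqueness of the Bartlett spectrum identifies $f_p^*$. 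This avoids the marked-point-process machinery entirely and makes the $p^2$ / $p(1-p)$ split transparent as a variance decomposition: the $p^2 f^*$ term is the ``between-$N$'' variance, the $p(1-p)m^*$ term the ``within-$N$'' thinning noise. What the paper's route buys is illustrating the spectral theory of marked point processes (which the authors explicitly flag as a goal of the appendix), and it generalises more readily to non-Bernoulli mark distributions. Your route is shorter, self-contained, and closer in spirit to the classical derivation via second factorial moment measures that the paper mentions is in \cite[Example 8.3(b)]{DaleyV1} — though still distinct, since you never touch the covariance measure itself. One small point of hygiene: you implicitly restrict to real-valued $\varphi$ (writing $\varphi^2$ rather than $\lvert\varphi\rvert^2$); this is harmless since real test functions determine the spectral measure, but it is worth stating.
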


\begin{proof}
    The proof is given in Appendix~\ref{app:proof_spectral_thinning}.
\end{proof}

Let us remark that Equation~\eqref{eq:spectral_thinning} can be found in \cite[Example 8.3(b)]{DaleyV1}, where they obtain this result by identifying first- and second-order properties of a bivariate point process. 
The proof presented in the appendix is an alternative way of establishing this expression, illustrating the usefulness of the spectral theory of point processes.

\subsection{\textit{p}-thinning as a subsampling method}\label{sec:p_thinning_as_subsampling}
In this section, we assume that we are provided with a single observation on a relatively short interval $[0,T]$ of a process $N$ belonging to the parametric model defined by
\begin{equation*}
    \mathcal{F}_\Theta = \{m\colon 0 < m < \infty\} \times \{ f_{0, \theta} \colon \theta \in \Theta, f_{0, \theta} \in L^2(D)\},
\end{equation*}
with $\Theta \subset \R^p$ a compact subset.
To improve estimation, we consider a penalised, namely Ridge, version of the contrast given in Equation \eqref{eq:contrast}: for any $\kappa > 0$, consider the Ridge estimator given by
\begin{equation*}
    \widehat m = \frac{N_T}{T}, \quad \text{and} \quad \widehat\theta^{(\kappa)} = \argmin_{\theta \in \Theta} \gamma_0(f_{0, \theta}) + \kappa \lVert \theta \rVert_2^2.
\end{equation*}
A common practice to select the penalisation constant $\kappa$ is to consider a grid of acceptable values for $\kappa$ and, for each, estimate the performance of the estimated model on an independent testing dataset.
When only one dataset is available, it is usually repeatedly split into multiple training and testing sets using e.g.\ bootstrapping or subsampling.
In our setting, subsampling is achieved through $p$-thinning, where retained points form the training process $N^{(p)}$ while rejected points form the testing process $N^{(\bar p)}$.

Formally, let $p \in (0,1)$, $\kappa > 0$, and let $N^{(p)}$ denote a $p$-thinned version of the process $N$ as in previous subsection.
Since the mean intensity and compensated spectral density of $N^{(p)}$ are given by $pm^*$ and $p^2 f_{0, \theta^*}$ respectively, we may define as an estimator of $\theta^*$ the quantity
\begin{equation*}
    \qquad \widetilde \theta^{(p, \kappa)} = \argmin_{\theta \in \Theta} \bigl\{ \gamma_0^{(p)}(f_{0, \theta}) + \kappa \lVert \theta \rVert_2^2 \bigr\},
\end{equation*}
where $\gamma_0^{(p)}$ is defined by
\begin{equation}\label{eqn:pcontrast}
    \gamma_0^{(p)}(f_{0, \theta}) = \int_{D} f_{0,\theta}^2(\nu) \, \dd\nu - 2 \int_{D}{f_{0,\theta}(\nu) \, \widetilde I_T^{(p)}(\nu) \, \dd \nu}
    + 2 \widehat m \int_{D}{f_{0, \theta}(\nu) \, \dd \nu},
\end{equation}
with
\begin{equation*}
    \widetilde I_T^{(p)}(\nu) = \frac{ \widehat I_T^{(p)} (\nu) - p(1-p) \widehat m}{p^2},
\end{equation*}
and
\begin{equation*}
    \widehat I_T^{(p)}(\nu) = \frac{1}{T} \left| \int_0^T e^{-2\pi \iu \nu t} \bigl( N^{(p)}(\dd t) - p \widehat m \dd t \bigr) \right|^2. 
\end{equation*}
The performance of the estimation for a given couple $(p, \kappa)$ is then given by computing the least-squares contrast on the testing set $N^{(\bar p)}$ consisting of all points in $N$ not retained in $N^{(p)}$, that is $N^{(\bar p)} = N - N^{(p)}$.
This is given by the following quantity,
\begin{equation*}
    r^{(p, \kappa)} \coloneqq \gamma_0^{(\bar p)} \bigl(f_{0, \widetilde \theta^{(p, \kappa)}} \bigr),
\end{equation*}
with $\gamma_0^{(\bar p)}$ defined similarly to $\gamma_0^{(p)}$ but with 
\begin{equation*}
    \widetilde I_T^{(\bar p)}(\nu) = \frac{ \widehat I_T^{(\bar p)} (\nu) - p(1-p) \widehat m}{(1-p)^2},
\end{equation*}
and 
\begin{equation*}
    \widehat I_T^{(\bar p)}(\nu) = \frac{1}{T} \left| \int_0^T e^{-2\pi \iu \nu t} \bigl( N^{(\bar p)}(\dd t) - (1 - p) \widehat m \dd t \bigr) \right|^2. 
\end{equation*}
Here, note that $\widehat I_T^{(p)}$ denotes the periodogram of $N^{(p)}$, i.e.\ the plug-in estimator of $f_p^* = pm^* + p^2 f_{0, \theta^*}$, while $\widetilde I_T^{(p)}$ denotes a rescaled version of $\widehat I_T^{(p)}$, and an estimator of $f^* = m^* + f_{0, \theta^*}$ (and likewise for $N^{(\bar p)}$).
This rescaling (rather than working on the scale of the $p$-thinned process) is important, since we aim to compare the performances of the estimator for different values of $p$.

In practice, we consider $n > 0$ independent repetitions $N_j^{(p)}$, $1 \le j \le n$, of the $p$-thinning from the original process $N$ for a grid of acceptable $(p, \kappa)$, yielding multiple estimators $\widetilde \theta_j^{(p, \kappa)}$ and testing errors $r_j^{(p, \kappa)}$.
Then, selecting
\begin{equation*}
    (\widehat p, \widehat \kappa) = \argmin_{(p, \kappa)} \frac{1}{n} \sum_{j=1}^n r_j^{(p, \kappa)} ,
\end{equation*}
we obtain the spectral Ridge estimator by averaging the estimators on each subsampled process, 
\begin{equation*}
    \widehat \theta^{(\widehat p, \widehat \kappa)} = \frac{1}{n} \sum_{j=1}^n \widetilde \theta_j^{(\widehat p, \widehat \kappa)}.
\end{equation*}

\section{Application to the Hawkes process}\label{sec:hawkes_and_simulations}

\subsection{Theoretical framework}

The linear Hawkes process is a doubly stochastic point process on $\R$ exhibiting self-excitation and clustering \citep{Hawkes1971, Hawkes1974}.
It is defined by its conditional intensity function, which describes the rate of arrival of points given the past and takes the form
\begin{equation*}
    \lambda_t = \mu + \int_{-\infty}^t h(t-u) N(\dd u),
\end{equation*}
where $\mu > 0$ is the baseline intensity, and $h: \R_+ \to \R_+$, the reproduction function of the process, is a measurable nonnegative function such that $\lVert h \rVert_1< 1$.
Its spectral density is well-known \citep[Example 8.2(e)]{DaleyV1} and given by
\begin{equation*}
    f(\nu) = m \bigl\lvert 1 - \widetilde h(\nu) \bigr\rvert^{-2},
\end{equation*}
with $m = \mu / (1 - \lVert h \rVert_1)$ the stationary intensity of the process, and $\widetilde h$ the Fourier transform of $h$.

We here consider the spectral estimation for the exponential Hawkes process, whose reproduction function takes the form $h(t) = \alpha \beta \exp(-\beta t)$, with $\alpha \in (0,1)$ and $\beta > 0$.
Under this specification, the spectral density of the process has the form
\begin{equation}\label{eqn:hawkes_spectrum}
    f(\nu) = m \left( 1 + \frac{\beta^2 \alpha (2 - \alpha)}{\beta^2 (1 - \alpha)^2 + 4\pi^2 \nu^2} \right).
\end{equation}
Proofs that the linear Hawkes process fall under the assumptions in Proposition \ref{prop:contrast} can be found in \cite{Yang2026} (see Section 5.1 therein).

We consider the estimation of $\theta = (\mu, \alpha, \beta)$ using our proposed spectral Ridge estimator, and compare it with more traditional estimation algorithms.
Note that, since we estimate $m = \mu / (1 - \alpha)$ through its plug-in estimator $\widehat m = N_T / T$, the parametric model we consider has the form
\begin{equation*}
    \mathcal{F}_\Theta = \{m\colon 0 < m < \infty\} \times \{ (\alpha, \beta) \colon \alpha \in (0,1), \beta > 0\},
\end{equation*}
and $\mu$ can then be estimated by $\widehat \mu = \widehat m (1 - \widehat \alpha)$.

\subsection{Framework for the simulation study}\label{sec:simulation_framework}

\paragraph{Simulation of the Hawkes process.}
Using the cluster representation of the Hawkes process \citep{Hawkes1974}, we simulated $n_{\text{sim}} = 256$ independent realisations of the exponential Hawkes process on the interval $[0,T]$ with $\mu^* = 1$, $\alpha^* = 0.5$, and $\beta^* = 2$ and a burn-in interval $[-100, 0]$.
Parameters $(\mu, \alpha, \beta)$ are then estimated using a number of different estimators.
For the spectral estimation methods, the periodogram is computed using a Fast Fourier transform \citep{Barnett2019}\footnote{Available under an Apache v2 license.}, and the frequency window, denoted $D$ in Section \ref{sec:contrast}, is fixed at $[-2, 2]$.
This is consistent with the Hawkes spectrum given in Equation \eqref{eqn:hawkes_spectrum}, for which $f_0(\nu) \ll 1$ outside this window for our choice of parameters.

\paragraph{Benchmark estimators.} 
Five different estimators were considered:
\begin{itemize}
    \item Ordinary Least Squares (OLS) estimation for the Hawkes process follows the work of e.g.\ \cite{Reynaud-Bouret2010} and \cite{Bacry2020} wherein the authors define as an estimator the minimiser of a contrast function whose expectation derives from a $L^2$-norm on the space of parameters.
    \item Maximum Likelihood (ML) estimation for the Hawkes process can be traced back to \cite{Ogata1978} and \cite{Ozaki1979}.
    This method is often the favoured one, as it is easy to implement (specifically for the exponential Hawkes process) and enjoys asymptotic statistical efficiency.
    \item Spectral Likelihood (SL) estimation is achieved through the minimisation of a Whittle likelihood. 
    First use case for the Hawkes process can be attributed to \cite{Adamopoulos1976}, though theoretical guarantees are more recent \citep{Cheysson2022, Yang2026}.
    \item Spectral Least Squares (SLS) estimation is our proposed estimation method, detailed in Section \ref{sec:contrast} and given by Equation \eqref{eqn:estimator}.
    \item Spectral Projection (SP) corresponds to the estimator $\widehat f^\perp$ detailed in the remark following Equation \eqref{eqn:estimator}.
\end{itemize}
We implemented the Ridge-penalised form for all five estimation methods, using the $p$-thinning subsampling method when appropriate as detailed in Section \ref{sec:p_thinning_as_subsampling}.
However, since $p$-thinning subsampling techniques cannot be considered for non-spectral methods (as the conditional intensity function appearing in both OLS and ML estimation methods is not tractable for the $p$-thinned Hawkes process), we also considered a different subsampling cross-validation method in order to select the penalisation constant $\kappa$: the observation window $[0,T]$ is partitioned into $k$ equally sized intervals, and the penalisation constant $\kappa$ is selected through Leave-One-Out Cross-Validation (LOOCV), with the performance of the estimation calculated on the left-out interval.
Note that, since the left-out interval usually splits the observation window into two intervals, these two intervals are concatenated such that there is a single training process on an observation window $[0, (k-1)T/k]$, akin to the work of \cite{Reynaud-Bouret2014} (e.g.\ see Figure 2 therein).
A summary recalling the different estimators, with and without Ridge penalisation, can be found in Table \ref{tab:estimators}.

\begin{table}[h]
    \caption{Benchmark Estimators Considered} \label{tab:estimators}
    \begin{center}
        \begin{tabular}{lccc}
        \toprule
        && \multicolumn{2}{c}{With Ridge penalisation}\\
        \cmidrule(lr){3-4}
        & Not penalised & $p$-thinning & LOOCV\\
        \midrule
        OLS & \cmark & \xmark & \cmark\\
        ML & \cmark & \xmark & \cmark\\
        SL & \cmark & \cmark & \cmark\\
        SLS & \cmark & \cmark & \cmark\\
        SP & \cmark & \cmark & \cmark\\
        \bottomrule
        \end{tabular}
    \end{center}
\end{table}

\paragraph{Choice of hyperparameters.} 
For the $p$-thinning-based cross-validation, the number of repeated subsampling was set at $n = 10$.
We considered the following grid for the penalisation parameters $p$ and $\kappa$ for the SLS estimation method:
\[p\in\{0.3, \ldots, 0.8\},\qquad \kappa\in\{2^{-14}, 2^{-13}, \ldots, 2^{3}\}.\]
Note that the choice of the grid for $\kappa$ may depend on the expected value of the objective function, so while this grid proved efficient for the SLS estimation method, we used different grids for $\kappa$ for other methods, adapting them as needed.
Grid values for other estimators may be found in the code in the Supplementary Material\footnote{For anonymity purposes, the code will be released in a GitHub repository and moved out of the Supplementary Material for the camera-ready version.}.
For LOOCV, we chose $k = 4$.

\paragraph{Performance metrics.}\label{sec:performance_metrics}
The different estimators are compared through Mean Square Error (MSE), defined as
\begin{equation*}
    \mathrm{MSE}(\theta) = n_\text{sim}^{-1} \sum_{l=1}^{n_\text{sim}} \lVert \theta_{l} - \theta^*\rVert_2^2.
\end{equation*}

\subsection{Numerical experiments}\label{sec:numerical_experiments}

\paragraph{Main results.}
MSE for all estimators and differing values of $T \in \{50, 100, 200, 400\}$ can be found in Figure \ref{fig:mse}.
Note that, for non-penalised methods, SLS and SP on one hand, and ML and OLS on the other hand, have undistinguishable MSE.

For small values of $T$, penalised methods significantly outperform non-penalised ones.
As displayed by the MSE of each parameter, this is mainly a consequence of non-penalised methods estimating the decay parameter $\beta$ poorly, while first-order parameters $\mu$ and $\alpha$ are broadly well-estimated.
Note also that $p$-thinning-based cross-validation for the SLS fares better than all other spectral approaches considered, highlighting the value of our proposed method.
For large $T$s, the performance of Ridge-penalised methods no longer significantly outperforms that of non-penalised methods, which can be expected from the latter yielding consistent estimators, so that Ridge-penalisation no longer brings improvement asymptotically.

Let us remark that temporal methods (ML and OLS) have lower MSE than spectral methods for all $T$s. 
This is expected for ML for large $T$s, owing to the asymptotic statistical efficiency of maximum likelihood estimation for point processes \citep{Ogata1978}.
Intuitively, it could be explained by spectral approaches only considering the first two moments of the process, while ML and OLS methods are derived from the conditional intensity which contains the whole information of the process.
Regardless, for large $T$s, the slope of the MSE for all methods reaches $-1$ (in log-log scale) corresponding to the $\bigO(T^{-1})$ rate of convergence found in asymptotic results for OLS \citep{Reynaud-Bouret2010}, ML \citep{Ogata1978} and SL \citep{Yang2026}.

\begin{figure*}[h]
    \vspace{.3in}
    \centerline{\includegraphics[width=\linewidth]{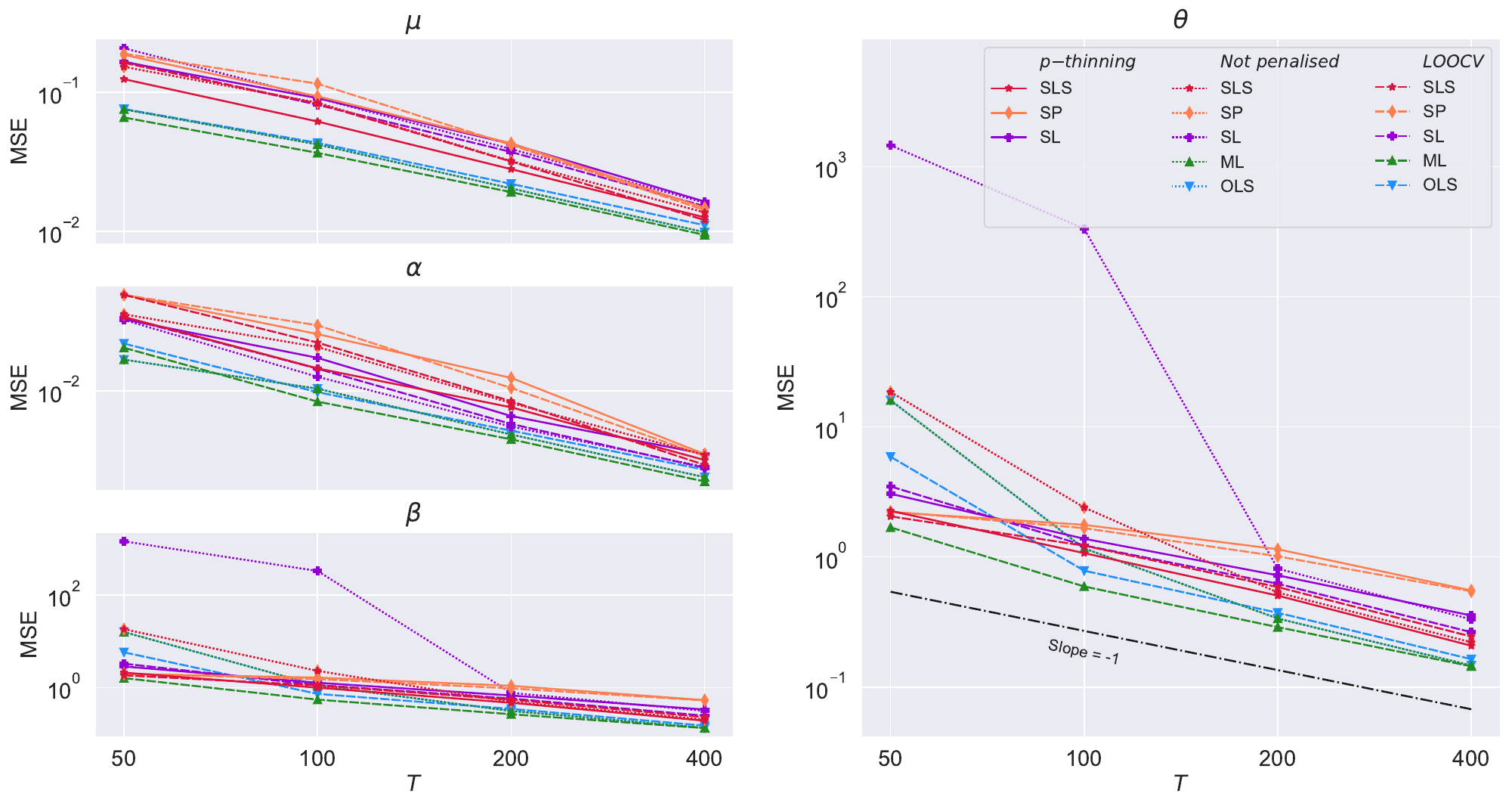}}
    \vspace{.3in}
    \caption{Mean Square Error For Benchmark Estimators}
    \label{fig:mse}
\end{figure*}

An interesting remark can be made about the difference in MSE between SLS and SP.
Even though both estimators are identical for non-penalised estimation as expected (up to numerical error), the story is different when $p$-thinning or LOOCV are involved, as the added term $\int \bigl( \widetilde I_T^{(\bar p)} \bigr)^2(\nu) \dd\nu$ appearing in $\widehat f^\perp$ changes between subsamples.
This creates significant differences when computing the error $r^{(p, \kappa)}$ on the testing sets, hence the difference between SLS and SP when penalising.

\paragraph{Hyperparameter tuning.}
Figure \ref{fig:hyperparameter} shows the number of times each $p$ and $\kappa$ were selected throughout the simulation study for the SLS estimation method.
For all values of $T$, the minimal value for $\kappa$ is selected about half of the time, corresponding to almost no penalisation.
Considering simulations for which $\kappa$ was not selected at its minimal value, then when $T$ increases, the mean value of $\kappa$ decreases, as can be expected from the consistency of the SLS method.

For smaller values of $T$, $p$ is mostly selected equal to 0.8, which is the maximum value explored in the study. As few observed points are available, a compromise must be made in order to keep enough points in the train sample to obtain decent estimations, at the cost of smaller test sets for estimating the testing error. 
As more points become available, we observe that the preferred value of $p$ becomes 0.5, balancing a sizeable training set for estimation and a similarly sized testing set for evaluating the testing error.

\begin{figure*}[h]
    \vspace{.3in}
    \centerline{\includegraphics[width=\linewidth]{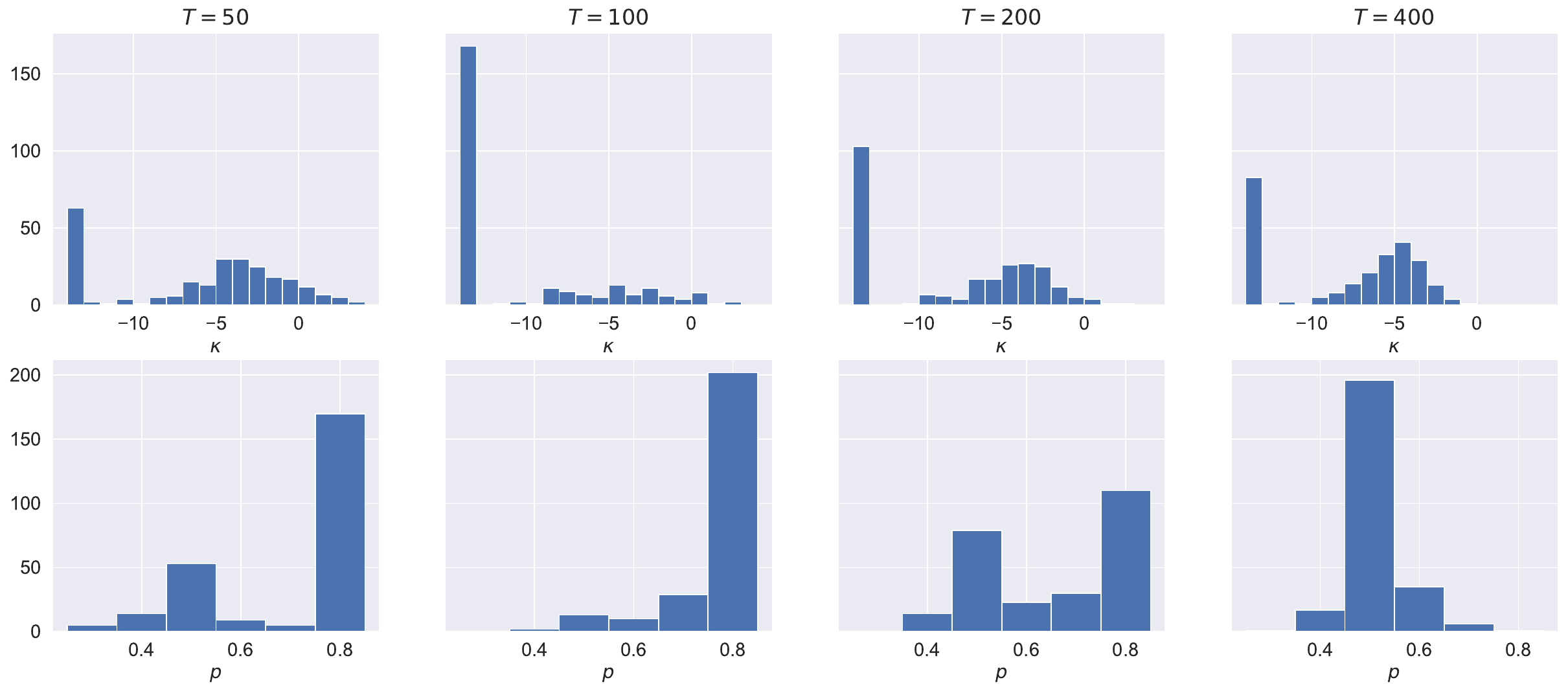}}
    \vspace{.3in}
    \caption{Proportion Of Simulations For Which $\kappa$ (In Base-2 Logarithm) And $p$ Are Selected For The SLS}
    \label{fig:hyperparameter}
\end{figure*}

\paragraph{Computation time.}
Average time (in seconds) for the computation of each studied estimator on a realisation of the Hawkes process on the interval $[0, 100]$ can be found in Table \ref{tab:computation_time}, carried on a MacBook Pro with Apple M4 chip and 24 GB RAM.
As expected, the Ridge-penalised estimation methods took orders of magnitude more time than non-penalised methods.
Note also that, due to the choice of selecting both $p$ and $\kappa$ in $p$-thinning-based cross-validation rather than just $\kappa$ in LOOCV, the former method takes a lot more time than the latter.
This could be circumvented by choosing an arbitrary value for $p$ (as we did for $k$ in LOOCV), bringing in this case the computation time of both cross-validation techniques to the same order.

\begin{table}[h]
    \caption{Computation Time For Each Benchmark Estimators (In Seconds)} \label{tab:computation_time}
    \begin{center}
        \begin{tabular}{lccc}
        \toprule
        && \multicolumn{2}{c}{With Ridge penalisation}\\
        \cmidrule(lr){3-4}
        & Not penalised & $p$-thinning & LOOCV\\
        \midrule
        OLS & 0.13 & \xmark & 87.10\\
        ML & 0.13 & \xmark & 18.25\\
        SL & 0.25 & 24.85 & 1.36\\
        SLS & 0.13 & 11.24 & 0.58\\
        SP & 0.14 & 40.41 & 2.14\\
        \bottomrule
        \end{tabular}
    \end{center}
\end{table}

Let us also remark that the choice of the exponential reproduction function is not insignificant and has an influence on computation time: since the exponential Hawkes process exhibits Markovian properties, its conditional intensity function can be computed in $\bigO(T)$ steps rather than the usual $\bigO(T^2)$ steps needed \citep{Ozaki1979}.
In contrast, the main computational burden in computing the objective function in spectral methods comes from the estimation of the periodogram of the process, which takes $\bigO(T \log T)$ steps using a fast Fourier transform.
Hence another choice of reproduction function would not shine so clearly on temporal (ML and OLS) methods, for which spectral methods would have a significant edge in computation time.

\section{Conclusion}
This article introduces a Ridge-penalised spectral least-squares estimator for second-order stationary point processes based around a $p$-thinning-based cross-validation method for hyperparameter tuning.
Our simulation study shows that this method yields better performances than existing spectral approaches against a single short realisation of the process, which may prove useful when dealing with real data. This is specially advantageous in contexts where conventional estimation methods are unavailable (e.g.\ likelihood methods when the Papangelou conditional intensity of the process is untractable).

\bibliography{bibliography}
\bibliographystyle{abbrvnatnourl}

\appendix

\section{PROOF OF PROPOSITION \ref{prop:spectral_thinning}}\label{app:proof_spectral_thinning}

Let us introduce an alternative way of viewing the $p$-thinning of a point process through marked point process theory (see also \cite{Cronie2024}).
Define the marked point process $\bar N$ associated with $N$, with marks $(Z_k)_{t\in\Z}$ on a metric space $\mathcal{K}$, as the collection of points $(T_k, Z_k)_{k\in\Z} \in (\R \times \mathcal{K})^{\Z}$ (see \cite[Chapter 6.4]{DaleyV1} for a more thorough presentation of marked point processes).
The random marks $Z_k$ are usually used to represent underlying information on the event times of a point process $N$, which is often referred to as the \textit{ground} process.
We will restrict ourselves to the case where the random variables $(Z_k)_{k\in\Z}$ are independent and identically distributed.
In this setting, process $\bar N$ is well-defined \citep[6.4.IV(a)]{DaleyV1}.

We can then view a $p$-thinning of $N$ as a marked version $\bar N$ where $\mathcal{K} = \{0,1\}$
and the $(Z_k)_{k\in\Z}$ are a collection of Bernoulli random variables of parameter $p$.
This way we may define the thinned process $N_p$, for any $B\in\mathcal{B}(\R)$, as:
\[N_p(B) = \bar N(B\times\{1\})\,.\]
Under this scope, we recall the results of \cite{Bremaud2005}, that we adapt to our notations.
The utility of these results lies on the link that it establishes between the covariance of a marked point process $\bar N$ and the Bartlett spectrum of its ground process $N$.

\begin{theorem}[{\cite[Theorem 2]{Bremaud2005}}]\label{th:spectral_marked}
    Let $N$ be a stationary point process with mean intensity $m$ and spectral measure $\Gamma$ and $\bar N$ a marked version of $N$ with i.i.d.\ marks $Z_k$ with shared distribution $Z$ on a metric space $\mathcal{K}$.
    Let $\varphi^\star, \psi^\star$ be measurable functions from $\R \times \mathcal{K} \to \R$, 
    such that:
    \begin{itemize}
        \item $\displaystyle
            \int_{\R}{\E\left[|\varphi^\star(x, Z)|\right]\dd x} < +\infty\,,\qquad\qquad \int_{\R}{\E\left[|\psi^\star(x, Z)|\right]\dd x} < +\infty\,.
        $
        \item $\displaystyle
            \int_{\R}{\E\left[\varphi^\star(x, Z)^2\right]\dd x} < +\infty\,,\qquad\qquad \int_{\R}{\E\left[\psi^\star(x, Z)^2\right]\dd x} < +\infty\,.
        $
        \item Denoting $\bar \varphi : x\to \E\left[\varphi^\star(x, Z)\right]$ and $\bar \psi : x\to \E\left[\psi^\star(x, Z)\right]$, assume that $\bar \varphi,\bar \psi \in L^2(\R) \cap L^1(\R)$.
    \end{itemize}

    Then, it follows that:
    \begin{align}\label{eq:marked_spectrum}
        \Cov\left(\sum_{k\in\Z}{\varphi^\star(T_k, Z_k)}, \sum_{k\in\Z}{\psi^\star(T_k, Z_k)}\right) =& 
        \int_{\R}{\widetilde {\bar\varphi} (\nu) \widetilde{\bar \psi} (-\nu)\,\Gamma(\dd \nu)} \\
        &+ \int_{\R}{\Cov\left( \widetilde \varphi^\star(\nu, Z), \widetilde \psi^\star (-\nu, Z)\right) m \dd \nu}\nonumber\,,
    \end{align}
    where, for any $\nu\in\R$ and function $f \in L^2(\R)$ (resp.\ $f \in L^2(\R \times \mathcal K)$), $\widetilde f(\nu)$ (resp.\ $\widetilde f(\nu, Z)$) denotes the Fourier transform of $x \mapsto f(x)$ (resp.\ $x \mapsto f(x, Z)$).
\end{theorem}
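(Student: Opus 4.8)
The plan is to condition on the ground process $N$ and use the law of total covariance, splitting the target covariance into a ``within-marks'' part, which will produce the second term of \eqref{eq:marked_spectrum}, and a ``between-points'' part, which will produce the first term.

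First I would set $U = \sum_{k\in\Z}\varphi^\star(T_k, Z_k)$ and $V = \sum_{k\in\Z}\psi^\star(T_k, Z_k)$ and check that these are well-defined elements of $L^2$. Since the marks are i.i.d.\ and independent of $N$, Campbell's formula gives $\E\bigl[\sum_k |\varphi^\star(T_k, Z_k)|\bigr] = m\int_\R \E|\varphi^\star(x,Z)|\,\dd x < +\infty$ by the first assumption, so $U$ (and likewise $V$) converges absolutely a.s.\ and lies in $L^1$; a second application of Campbell together with the second assumption controls $\E\bigl[\Var(U\mid N)\bigr] = m\int_\R \Var(\varphi^\star(x,Z))\,\dd x$, while the third assumption and \eqref{eq:bartlett_variance} control $\Var\bigl(\int_\R \bar\varphi(x)\,N(\dd x)\bigr)$, so that $U, V \in L^2$ and the decomposition
\[
\Cov(U, V) = \E\bigl[\Cov(U, V\mid N)\bigr] + \Cov\bigl(\E[U\mid N],\, \E[V\mid N]\bigr)
\]
is legitimate.

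For the second term, conditional independence of the marks and their independence from $N$ give $\E[U\mid N] = \sum_k \bar\varphi(T_k) = \int_\R \bar\varphi(x)\,N(\dd x)$ and similarly for $V$; polarising the Bartlett identity \eqref{eq:bartlett_variance} over the real functions $\bar\varphi, \bar\psi \in L^1(\R)\cap L^2(\R)$ and using that $\Gamma$ is symmetric (so that spurious imaginary contributions integrate to zero and $\overline{\widetilde{\bar\psi}(\nu)} = \widetilde{\bar\psi}(-\nu)$) yields $\Cov\bigl(\int_\R\bar\varphi\,\dd N, \int_\R\bar\psi\,\dd N\bigr) = \int_\R \widetilde{\bar\varphi}(\nu)\,\widetilde{\bar\psi}(-\nu)\,\Gamma(\dd\nu)$, the first term of \eqref{eq:marked_spectrum}. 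For the first term, note that given $N$ the summands indexed by distinct $k$ are independent, so $\Cov(U, V\mid N) = \sum_k c(T_k)$ with $c(x) := \Cov\bigl(\varphi^\star(x,Z),\, \psi^\star(x,Z)\bigr)$, and Campbell's formula gives $\E\bigl[\Cov(U,V\mid N)\bigr] = m\int_\R c(x)\,\dd x$. It then remains to identify $\int_\R c(x)\,\dd x$ with $\int_\R \Cov\bigl(\widetilde\varphi^\star(\nu, Z),\, \widetilde\psi^\star(-\nu,Z)\bigr)\,\dd\nu$: writing $c(x) = \E[\varphi^\star(x,Z)\psi^\star(x,Z)] - \bar\varphi(x)\bar\psi(x)$, I would apply Parseval's theorem for each fixed realisation of $Z$ (valid since $\varphi^\star(\cdot, Z),\psi^\star(\cdot, Z) \in L^1(\R)\cap L^2(\R)$ a.s.\ by the first two assumptions) to get $\int_\R \varphi^\star(x,Z)\psi^\star(x,Z)\,\dd x = \int_\R \widetilde\varphi^\star(\nu,Z)\,\widetilde\psi^\star(-\nu,Z)\,\dd\nu$, take expectations, move them inside the $\nu$-integral by Fubini (Cauchy--Schwarz plus the second assumption), and do the same for $\int_\R \bar\varphi\bar\psi\,\dd x = \int_\R \widetilde{\bar\varphi}(\nu)\widetilde{\bar\psi}(-\nu)\,\dd\nu$ with $\widetilde{\bar\varphi}(\nu) = \E[\widetilde\varphi^\star(\nu, Z)]$ (Fubini, first assumption). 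Subtracting and adding the two contributions gives \eqref{eq:marked_spectrum}.

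The main obstacle is bookkeeping rather than conceptual: ensuring every moment is finite so that the law of total covariance and all Fubini interchanges are justified---which is exactly what the three sets of hypotheses are for---and keeping the complex-conjugate/frequency-reflection conventions ($\overline{\widetilde g(\nu)} = \widetilde g(-\nu)$ for real $g$) consistent between the two Parseval-type identities and in the polarisation of Bartlett's formula, so that the symmetry of $\Gamma$ kills the imaginary parts and the stated real-valued expression is recovered.
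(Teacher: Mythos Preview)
The paper does not actually prove this theorem: it is quoted verbatim from \cite{Bremaud2005} and used as a black box in the proof of Proposition~\ref{prop:spectral_thinning}. So there is no ``paper's own proof'' to compare against.

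That said, your argument is correct and is essentially the standard route to this identity. The decomposition via the law of total covariance cleanly separates the two terms: the conditional-expectation piece reduces to $\Cov\bigl(\int\bar\varphi\,\dd N,\int\bar\psi\,\dd N\bigr)$, which is exactly what the polarised Bartlett identity~\eqref{eq:bartlett_variance} computes, and the conditional-covariance piece collapses to a single Campbell integral because distinct marks are independent given $N$. The Parseval rewriting of $m\int_\R c(x)\,\dd x$ into the frequency-domain covariance is the right move and your justifications (a.s.\ membership of $\varphi^\star(\cdot,Z)$ in $L^1\cap L^2$ via Fubini on the first two hypotheses, then Fubini again to swap $\E$ and $\int\dd\nu$) are sound. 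Two small points worth tightening in a final write-up: (i) make explicit that ``i.i.d.\ marks'' here includes independence from the ground process $N$, since you use this both for $\E[U\mid N]=\int\bar\varphi\,\dd N$ and for the diagonalisation of $\Cov(U,V\mid N)$; and (ii) the $\Cov$ appearing on the right of~\eqref{eq:marked_spectrum} is the bilinear form $\E[XY]-\E[X]\E[Y]$ for complex $X,Y$ (no conjugate), which is precisely what your Parseval computation produces---worth flagging so the reader does not default to the sesquilinear convention.
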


Within the notations of Theorem \ref{th:spectral_marked}, consider that the $Z_k$ are i.i.d.\ Bernoulli random variables with common probability $p$, and let $\varphi, \psi \in L^2(\R) \cap L^1(\R)$.
We define, for all $x,z\in\R\times\{0,1\}$, the functions
\[\varphi^\star(x,z) = \varphi(x)z\,,\qquad \psi^\star(x,z) = \psi(x)z\,.\]
Let us verify that these functions satisfy the conditions of Theorem~\ref{th:spectral_marked}. 
Without loss of generality, we will work uniquely with $\varphi^\star$, as the arguments are exactly the same for $\psi^\star$.

For any $x\in\R$, $\varphi^\star(x, Z) = \varphi(x)Z$ for Z a Bernoulli random variable with parameter $p$.
It follows that $\varphi^\star(x, Z)$ admits a first- and second-order moment, and as $\varphi\in L^2(\R) \cap L^1(\R)$, 
$\varphi(x, Z)$ is integrable and square integrable, which shows that:
        \[
        \int_{\R}{\E\left[|\varphi^\star(x, Z)|\right]\dd x} < +\infty\,,\qquad 
        \int_{\R}{\E\left[\varphi^\star(x, Z)^2\right]\dd x} < +\infty\,.
        \]
Furthermore, for any $x\in\R$, $\bar \varphi(x) = \varphi(x)\E[Z] = p \varphi(x)$,
and so, as $L^2(\R) \cap L^1(\R)$ is closed under scalar multiplication, it follows that $\bar \varphi\in L^2(\R) \cap L^1(\R)$.

We can then apply Equation~\eqref{eq:marked_spectrum} to our marked process $\bar N$. 
For this, let us notice that:
\[
    \sum_{k\in\Z}{\varphi^\star(T_k, Z_k)} = \sum_{k\in\Z}{\varphi(T_k)Z_k} = \int_{\R}{\varphi(t) N_p(\dd t)}\,,
\]
with the same expression holding for $\psi^\star$ and $\psi$. 
So, the left-hand side of Equation~\eqref{eq:marked_spectrum} reads:
\begin{align}
    \Cov\left(\sum_{k\in\Z}{\varphi^\star(T_k, Z_k)}, \sum_{k\in\Z}{\psi^\star(T_k, Z_k)}\right) &=
    \Cov\left(\int_{\R}{\varphi(t) N_p(\dd t)}, \int_{\R}{\psi(t) N_p(\dd t)}\right) \nonumber\\
    &= \int_{\R}{\widetilde \varphi(\nu)\widetilde \psi(-\nu)\,\Gamma_p(\dd\nu)}\,,\label{eq:leftside_proof_thinning}
\end{align}
where the last equality comes from polarising Equation~\eqref{eq:bartlett_variance}.

For the right-hand side, let us remark that $\bar \varphi(x) = p \varphi(x)$ for any $x\in\R$ and so, for any $\nu\in\R$,
\[
    \widetilde {\bar \varphi}(\nu) = p \widetilde \varphi(\nu), \qquad \text{and} \qquad \widetilde \varphi^\star(\nu, Z) = \widetilde \varphi(\nu)Z.
\]
The right-hand side of Equation~\eqref{eq:marked_spectrum} then becomes:
\begin{align}
    \int_{\R}{\widetilde {\bar\varphi} (\nu) \widetilde{\bar \psi} (-\nu)\,\Gamma(\dd \nu)} 
+ &\int_{\R}{\Cov\left( \widetilde \varphi^\star(\nu, Z), \widetilde \psi^\star (-\nu, Z)\right)m \dd \nu} \nonumber\\
&= \int_{\R}{p^2 \widetilde {\varphi} (\nu) \widetilde{\psi} (-\nu)\,\Gamma(\dd \nu)} 
+ \int_{\R}{\widetilde \varphi(\nu)\widetilde \psi(\nu) \Cov\left(Z,Z\right) m \dd \nu} \nonumber\\
&= \int_{\R}{\widetilde {\varphi} (\nu) \widetilde{\psi} (-\nu)\,\bigl( p^2\Gamma(\dd \nu) + p(1-p) m \dd \nu \bigr)}\,.\label{eq:rightside_proof_thinning}
\end{align}

Combining both sides (Equations~\eqref{eq:leftside_proof_thinning} and \eqref{eq:rightside_proof_thinning}), it follows that
for any $\varphi, \psi\in L^2(\R) \cap L^1(\R)$:
\[
    \int_{\R}{\widetilde \varphi(\nu)\widetilde \psi(-\nu)\,\Gamma_p(\dd\nu)} = \int_{\R}{\widetilde {\varphi} (\nu) \widetilde{\psi} (-\nu)\, \bigl(p^2\Gamma(\dd \nu) + p(1-p) m \dd \nu \bigr)}.
\]
As this equality holds for any functions in $L^2(\R) \cap L^1(\R)$ so that, by duality of the Fourier transform \citep{Pinsky2008},
\[\Gamma_p = p^2 \Gamma + p(1-p) m,\]
which achieves the proof.
\end{document}